\let\footnote=\endnote
\theoremstyle{plain} 
\newtheorem{theorem}{Theorem}
\theoremstyle{definition}
\newtheorem{definition}[theorem]{Definition}
\title{A Comparative Analysis of Modeling Approaches for the Association of FAIR Digital Objects Operations}
\author[1]{Nicolas Blumenröhr\thanks{\texttt{nicolas.blumenroehr@kit.edu}}}
\author[2]{Jana Böhm}
\author[1]{Philipp Ost}
\author[3]{Marco Kulüke}
\author[4]{Peter Wittenburg}
\author[5]{Christophe Blanchi}
\author[2]{Sven Bingert}
\author[2]{Ulrich Schwardmann}
\affil[1]{Karlsruhe Institute of Technology, Scientific Computing Center, Germany}
\affil[2]{GWDG, Germany}
\affil[3]{German Climate Computing Center, Germany}
\affil[4]{Max Planck Institute for Psycholinguistics, Netherlands}
\affil[5]{DONA Foundation, Switzerland}
\date{}
\begin{document}

\maketitle
\begin{abstract}
    The concept of FAIR Digital Objects represents a foundational step towards realizing machine-actionable, interoperable data infrastructures across scientific and industrial domains. As digital spaces become increasingly heterogeneous, scalable mechanisms for data processing and interpretability are essential. This paper provides a comparative analysis of various typing mechanisms to associate FAIR Digital Objects with their operations, addressing the pressing need for a structured approach to manage data interactions within the FAIR Digital Objects ecosystem. By examining three core models -- record typing, profile typing, and attribute typing -- this work evaluates each model's complexity, flexibility, versatility, and interoperability, shedding light on their strengths and limitations. With this assessment, we aim to offer insights for adopting FDO frameworks that enhance data automation and promote the seamless exchange of digital resources across domains.
\end{abstract}
\noindent\textbf{Keywords:} FAIR Digital Objects, Metadata, FAIR Principles, Object-oriented Programming, Machine-Actionability, Type System
\section{Introduction}
\label{sec:Introduction}
Science advances to incorporate previously isolated domains into more comprehensive views of physical and biological processes, and industry addresses an increased need for interconnected supply chains and operational technologies. Therefore, it becomes increasingly important to develop common approaches for automated acquisition, interpretability, and processing of digital data~\citep{wilkinson_fair_2016, directorate-general_for_research_and_innovation_european_commission_eosc_2021, jeffery_not_2021}. Processing very large, heterogeneous, and diverse data sets from different domains using the existing sets of incompatible APIs applicable to those data sets is simply not possible~\citep{soiland-reyes_evaluating_2024}. However, it is widely agreed that the future of data processing must be highly automated to cope with the increasing amounts of digital resources which are of great importance for meeting the requirements of UNs Sustainable Development Goals~\citep{madavarapu_ai-powered_2024}. A foundational infrastructure that provides a common and more automatable approach to discovering and executing operations on data could have the same impact on data processing that the Internet and Web technologies have had on communication and multimedia information exchange~\citep{schultes_fair_2019, peter_wittenburg_common_2018}. This could lead to large and necessary advances in scientific discovery, and industrial efficiency and sustainability.

The FAIR Digital Objects (FDOs) concept describes how such an infrastructure could be realized by representing digital resources of any type in a way that enables automated processing~\citep{schultes_fair_2019, smedt_fair_2020, blumenrohr_fair_2024}. It does so by implementing the FAIR Principles~\citep{wilkinson_fair_2016}, which provide guidelines for better data management and stewardship, using the Digital Object framework~\citep{kahn_framework_2006}. While different implementation strategies for FDOs exist, they all aim towards an automated processing by the machine-actionable characteristics of an FDO that is enabled by operations. An operation will in general be associated with an FDO by its typing mechanism and may be executed on different FDO levels, i.e., the metadata or the bit sequence of the digital resource~\citep{blumenrohr_fair_2024}. Operations may range from basic CRUD operations to more advanced operations and can be implemented using various technologies. However, the exact specification of a type system for FDOs that enables a mechanism to associate the objects with applicable operations is not yet fully scoped~\citep{soiland-reyes_evaluating_2024, blumenrohr_fair_2024}. At this point, there exist different views and implementations for associating FDOs and operations by typing. In fact, having multiple approaches is desirable, since there may not be a one-size-fits-all solution. Nevertheless, to ensure an interoperable ecosystem for FDOs, it is important to assess if and how these approaches are compatible with each other, also with respect to client knowledge. Providing a structured analysis of these association models will support the adoption of FDOs by different communities. Associating FDOs with their operations is seen as the missing step in data processing automation by machine-actionability. Formalized type specifications and user intentions paired with formalized reuse conditions will be key in this regard.
 
In this work, we provide an assessment of existing typing mechanisms for associating  FDOs with their operations based on different conceptual data models. We describe each data model along with an implementation example, and comparatively evaluate their characteristics with respect to these typing mechanisms. Based on the evaluation, we discuss the results in the larger context of FDO processability and perspectives for communities that want to adopt the concept.

\section{Background}
\label{sec:Background}
\subsection{Foundations of FDOs and the Core Model}
\label{sec:Foundations of FDOs and the Core Model}
FDOs are high-level, atomic, and persistent entities that bundle information for FAIR processing of a bit-sequence including different kinds of metadata, are referenced by a Persistent Identifier (PID), are FAIR compliant in their core mechanisms~\citep{, blumenrohr_fair_2024}, and can be protected against misuse in various dimensions~\citep{smedt_fair_2020}. In the FDO core model, each FDO represents a basic structure which allows for different configurations, i.e. configuration types~\citep{larry_fdo_2022}, and has the following characteristics: 
\begin{itemize}
    \item A Handle PID will be resolved into an FDO information record that contains the Kernel Information.
    \item The Kernel Information describes the FDO core metadata attributes such as its data type, location and additional metadata references. 
    \item The Kernel Information is structured as a set of attributes expressed as a set of key-value pairs, aggregated by a Kernel Information Profile~\citep{weigel_2019_3581275} that the information record must conform to.
    \item For compatibility reasons, only a minimal set of attributes are being specified in the Kernel Information Profile as also proposed by the FDO Forum\footnote{\url{https://fairdo.org/}} and the Research Data Alliance \footnote{\url{https://www.rd-alliance.org/}}. 
    \item Each attribute being included in the profile must be defined and registered in a public registry according to the specification of PID-Information Types~\citep{schwardmann_automated_2016}, making it machine-interpretable.
    \item It is actionable through a set of operations that are associated with the Kernel Information via a typing mechanism.
\end{itemize}

This minimal definition of the FDOs follows the spirit of the Internet which defines a basic package structure for information transfer and allows making use of a communication protocol for FDOs, the Digital Object Interface Protocol (DOIP)~\citep{dona_foundation_digital_2018}. FDOs can represent bit-sequences with different kinds of content such as data, metadata, configurations, semantic assertions, software, etc. As illustrated in \autoref{fig:fdo_core_model}, due to their generic core model, FDOs have the potential to be used as a basic interoperability layer to connect different types of repositories and data spaces~\citep{curry_data_2022}. For further technical details on FDOs, we refer to the FDO Overview~\citep{ivonne_2023_7824714}, and the FDO Requirement Specifications~\citep{ivonne_2023_7782262}. Note that the term \textit{profile} is used interchangeably with the term \textit{Kernel Information Profile} in the subsequent sections.

\begin{figure}[tbp]
    \centering
    \includegraphics[width=0.6\textwidth]{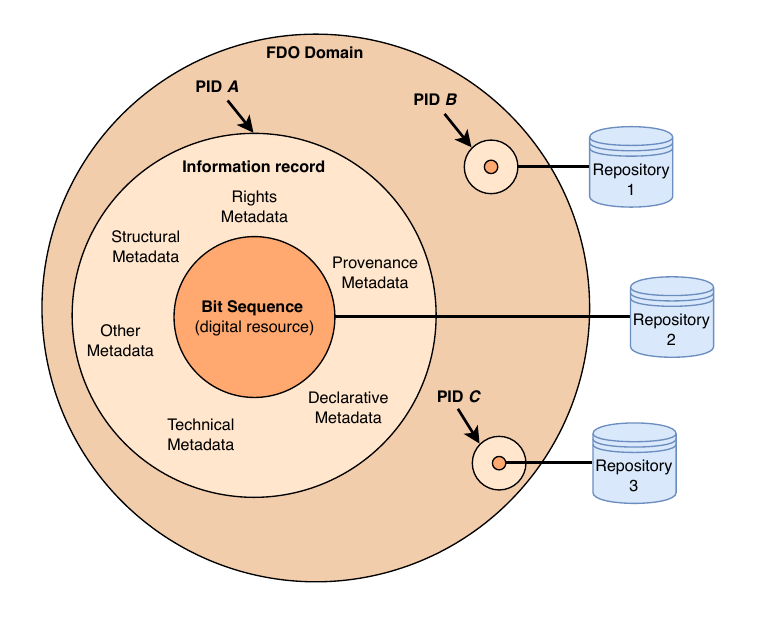}
    \caption{The FDO Core model.}
    \label{fig:fdo_core_model}
\end{figure}

\subsection{Problem Description}
\label{sec:Problem Description}
Several works on FDO implementations have described the theoretical applicability of FDO operations, e.g.~\citep{blumenrohr_fair_2024, blanchi_canonical_2022, islam_fair_2023, lannom_fair_2020} or have even implemented specialized systems that enable the execution of operations in their FDO ecosystem, e.g.~\citep{islam_assessing_2023, blumenrohr_implementation_2023}. However, to the best of our knowledge, a set of generic mechanisms for associating these operations with FDOs via a set of rules, i.e., a type system, in compliance with the description of the original concept has not been worked out yet. This makes it hard to assess and reproduce these use-case specific operation frameworks.
 
The authors of this work have developed typing mechanisms to associate FDOs and operations within their organizations, which were extensively discussed in the frame of the FDO Forum. At this point, there exist some reference implementations for these mechanisms as described in the following sections, but no detailed explanations on their data models and how they compare to each other. Therefore, we see this paper as a step forward to assess these association models and provide a baseline for implementing (inter-)operable FDO ecosystems.

\section{Models for Associating FDOs with their Operations}
\label{sec:Models for Associating FDOs with their Operations}
In this section, we will first describe the different modeling approaches for the association of FDOs and operations and their underlying typing mechanisms. We thereby assume that an FDO is specified according to the core model described in section \ref{sec:Foundations of FDOs and the Core Model}. We first elaborate on the general idea of these typing mechanisms that we define as part of a type system for FDOs that defines rules for the FDO components. These typing mechanisms are related to well-known typing principles in computer science and are finally incorporated in each association model. Technical implementation details for these association models are not considered.
 
In the second part of this section, we will go through several application examples that use the different association models based on the typing mechanisms introduced under the rules of the FDO type system.

\subsection{Typing Mechanisms}
\label{sec:Typing Mechanisms}
The problem with the terms type and typing is that they are generic, and often have different definitions across disciplines and technologies. This work does not aim towards providing a definition of these terms but requires a more concrete description in the context of FDOs. It can be said at this point that many of the terms employed relate to ideas known from Object-oriented Programming (OOP), of which relations to other principles such as abstraction and encapsulation have already been described by the works of~\citep{schultes_fair_2019, blumenrohr_fair_2024}. The next step is to also infer mechanisms for associating operations on the basis of abstraction and encapsulation provided by FDOs. It is important to note that we consider the analogy between OOP and FDOs only on an abstract, conceptual level, whilst the implementation details of FDOs are a different aspect. The following terms also found in OOP are therefore defined in the context of FDOs:
\begin{itemize}
    \item Abstraction and Encapsulation: FDOs pack data and metadata into a single unit by definition, encapsulating internal details. The interface to the FDO is given by attributes describing possible interactions. The set of attributes is given by its profile. The profile itself is therefore a class. It is an abstraction of all FDOs that fulfill the requirements of the profile.
    \item Type system: inspired by the work of~\citep{citeulike:105547}, we define this as a set of rules for validating how FDOs are typed and associated with a set of operations by one or more typing mechanisms.
    \item Typing Mechanism: the exact procedure of how the FDO kernel information, consisting of the profile, information record and its content that make up the FDO type, is used to determine if an operation is associated with a particular FDO.
\end{itemize}
The typing mechanisms for FDOs to associate operations with FDOs and their characteristics are described in the following. The details and relations of these procedures to principles known from OOP are illustrated in \autoref{fig:typing_mechs}.
\begin{figure}[tbp]
    \centering
    \includegraphics[width=0.9\textwidth]{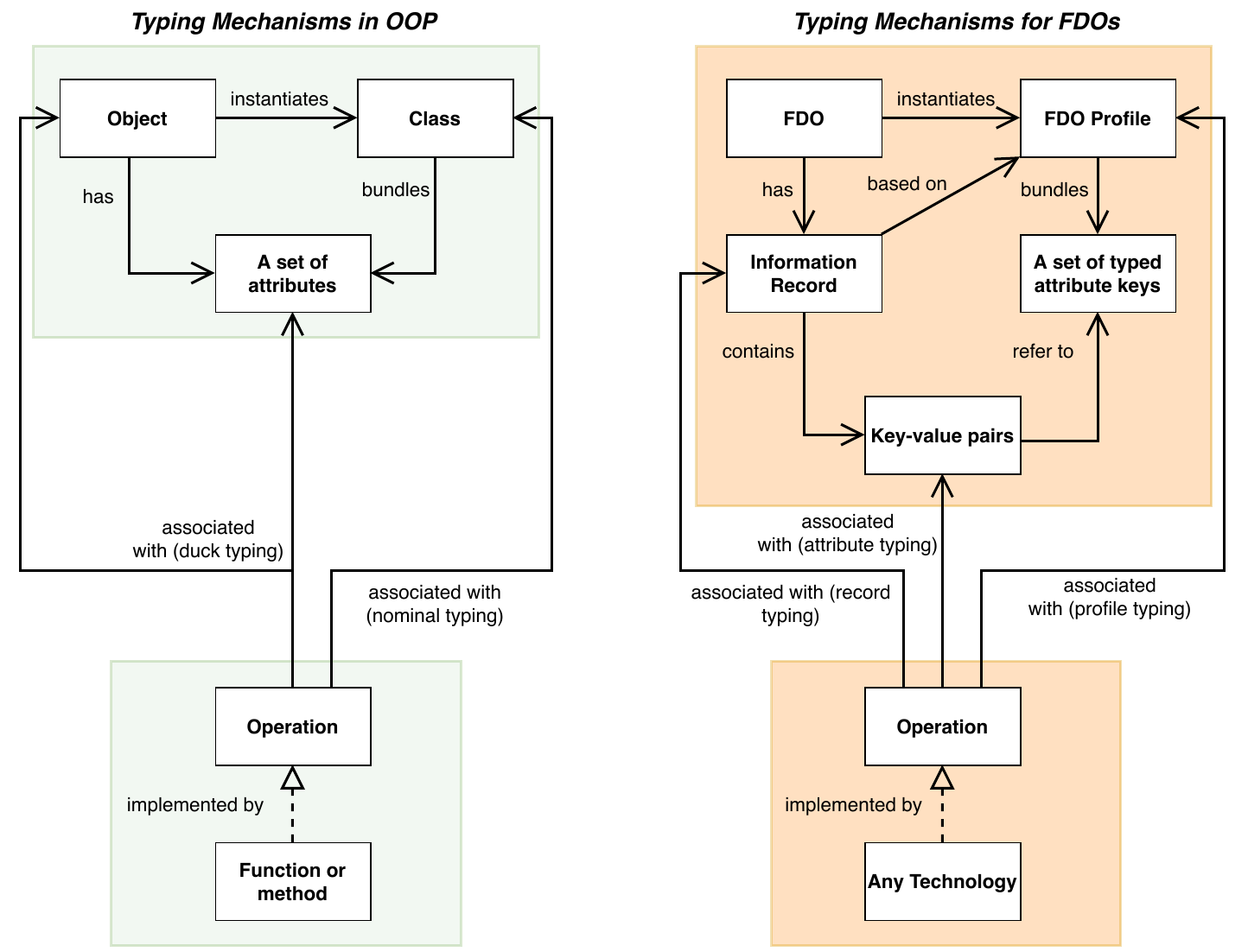}
    \caption{Typing Mechanisms. The conceptual typing mechanism to associate FDOs and their operations in analogy to OOP.}
    \label{fig:typing_mechs}
\end{figure}

With respect to the association approach, there exist two obvious possibilities. The first is to extend the FDO interfaces and include operations as attributes into the FDO record by changing the profile (operation association to FDO). The other is to leave the interfaces of the data FDOs unchanged, and to describe requirements of operations for the interfaces of the operation representation (FDO association to operation). This can be represented as relations between operations and types in a dedicated type system, e.g. based on PID-Information Types. 

Even though object association to operations is also possible within OOP, operation association to objects of classes is encouraged there as part of the encapsulation. Operations behind REST services are also usually associated to the objects behind their interfaces. Object association to operation is more commonly used in the context of media types, where the applicability of an operation is decided by the type of object. The type encapsulates the internal complexity of both the object and the operation.
This results in three core mechanisms of typing that we detail in the following.
\subsubsection{Record Typing}
\label{sec:Record Typing}
The most straightforward way of typing in FDOs can be achieved by specifying an operation directly in the information record of the FDO, thereby directly associating each operation with the individual object. The type is hereby purely defined by the constellation of applicable operations. Conceptually, this is similar to the principle of duck typing in OOP where the type of an object is determined not by its explicit class but by the methods it supports, focusing on what the object can do rather than what it is. All applicable operations are therefore also part of the attributes in the FDO information record and fixed at instantiation time of the object.

\subsubsection{Profile Typing}
\label{sec:Profile Typing}
Another way is profile typing which means that operations that are associated with an FDO are inferred by their association with the profile that is instantiated by this FDO and is thus considered the type. Attaching the operations to FDO profiles is possible because each FDO has a profile as a mandatory attribute in its information record according to the Kernel Information requirements. This is comparable to nominal typing in OOP where an operation in the form of a method is bound to a class, meaning that it operates on instances of that class (objects) and has access to the class's attributes. 

\subsubsection{Attribute Typing}
\label{sec:Attribute Typing}
This typing mechanism uses the set of attributes in an FDO’s information record, where each operation is associated by the presence of one or more attributes that constitute the type in dependency of these requirements. This also relates to duck typing in OOP with the aspect that an object's usability can also be determined by the presence of specific attributes at runtime, rather than the object's class. 
This works for FDOs because their attributes refer to the specification of PID Information Types, meaning that each element is unambiguously identified and can be reused and recognized for all FDOs.

\subsection{Implementation Examples}
The examples described in this subsection originate from different projects and organizations the authors are involved in, using different types of data, technologies and service architectures. We concentrate here on the association models and the essential workflow, also considering information exchange between FDO services and the client side. Therefore, apart from a minimal necessary description, we do not provide technical details of each implementation and the service components that are used in these projects. We also do not further explain the details of how these operations are ultimately applied to the contents of the FDOs they are associated with. For this, we refer to the references provided in each section. We also want to point out that different complexity levels of these implementations are not necessarily related to the complexity of the individual association model. These will be evaluated in section \ref{sec:Model Evaluation and Discussion}. However, according to the FDO core model, each FDO in these examples is registered at - and thus resolvable via - the Handle Registry, has a typed information record, and complies to one of the known FDO configuration types. 
\subsubsection{Record Typing in Interactive Computing Environments}
\label{sec:Record Typing in Interactive Computing Environments}
This example considers a simple FDO information record that represents a catalog containing links to various climate model simulations described by domain-specific metadata key-value pairs. FDO-related information is statically implemented in the record. Hence, \autoref{fig:record_typing} lays out how the implementation of an association mechanism for operations via record typing works in principle.
\begin{figure}[tbp]
    \centering
    \includegraphics[width=0.7\textwidth]{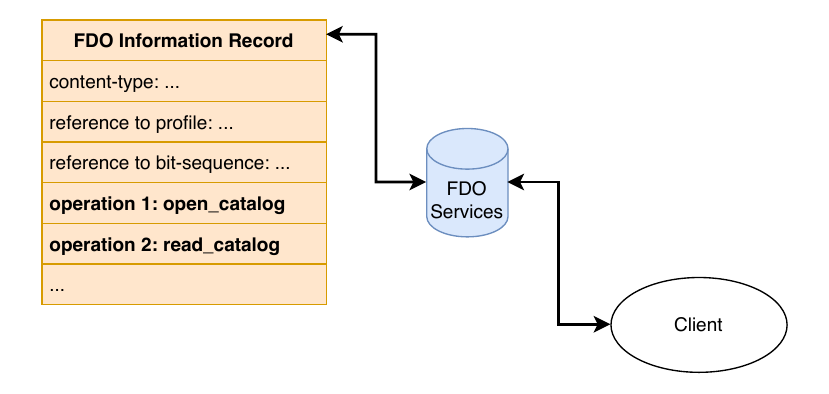}
    \caption{Record typing example. The conceptual workflow for interacting with an FDO based on record typing.}
    \label{fig:record_typing}
\end{figure}
The diagram shows a workflow illustrating the interaction between an FDO and a client using an computational environment, i.e., a Jupyter Notebook, to retrieve predefined operations (here labeled as operation 1 for opening the catalog and operation 2 for reading the catalog) that are bundled in the information record among other metadata required to execute the operation such as the content-type, the  reference to the bit-sequence, or other metadata.

Depending on the FDO service, a client can either request the list of associated operations or directly retrieve them from the FDO’s information record. In this example, these operations are a specification of code that is executable on the client side.

From a technical implementation perspective, the information record itself contains a set of operations that are in principle relevant to any object conforming to the content-type it represents. 
Note that the FDOs cannot dynamically change their operations or substitute them at runtime. The operations are fixed and cannot vary based on different FDO subtypes. The Jupyter Notebook can be found at~\citep{kuluke_2025_14860533}.

\subsubsection{Profile Typing with Multiple Registries}
\label{sec:Profile Typing with Multiple Registries}

Within the FDO One project,\footnote{\url{https://fdo-one.org}} the focus is on providing basic operations for FDOs to build up a functional FDO ecosystem, e.g. CRUD operations (create and delete an FDO, get or update the (meta)data of an FDO) or copying an FDO and moving a distributed FDO from one storage location (data service) to another. For those type of operations, domain-specific attributes and content-types of bit-sequences are irrelevant. Rather, the structure of the FDO itself is of importance, for example, whether it represents zero, one, or multiple (meta)data bit-sequences and how those are stored. This information is determined by the FDO profile. Hence, the profile typing mechanism is used to associate those operations to FDO profiles. In particular, each FDO profile does not only contain a list of mandatory and optional attributes which must be present in an FDO information record, but also a list of operations that can be applied to any FDO complying with this specific FDO profile. Profiles are registered in the profile registry, which is based on a Data Type Registry\footnote{\url{https://typeregistry.lab.pidconsortium.net/}}.

As outlined in \autoref{fig:profile_typing}, to find operations that are associated to an FDO, a client may retrieve the profile (either directly or through a software component) and receive a list of PIDs identifying operations that are associated to this FDO. The operations, in turn, are registered in the operation registry together with all necessary execution information\footnote{Strictly speaking, in order to determine whether the above mentioned operations are executable on an FDO, the technical capabilities of the data service where the FDO is stored also need to be taken into account. This is because these type of operations do not just involve reading bit-sequences, but potentially also require to register new PIDs or manipulate FDO information records, which requires actions of the data service itself. Hence, a service registry is used to store which specific profiles and operations a data service supports. However, describing these mechanisms in detail is out of the scope of this paper.}.
For further reading and technical details of the FDO One testbed implementation, we refer to~\citep{fdo-one-testbed-doc}.
\begin{figure}[tbp]
    \centering
    \includegraphics[width=0.9\textwidth]{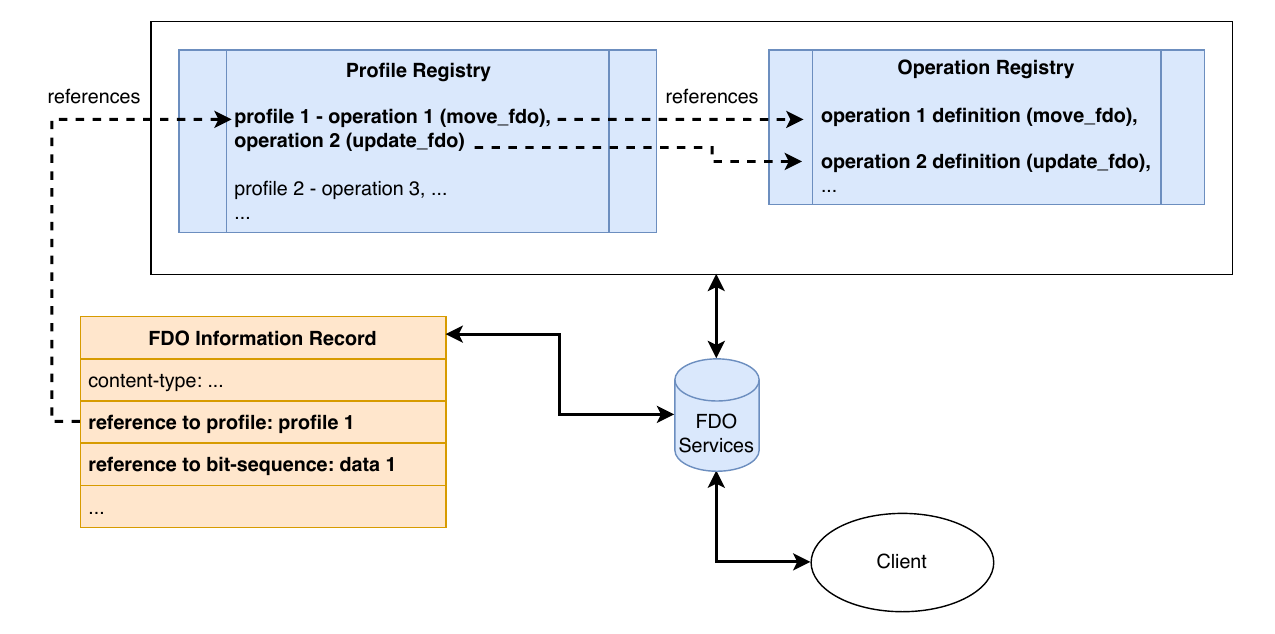}
    \caption{Profile typing example. The conceptual workflow for interacting with an FDO based on profile typing. Irrespective of the service architecture that is used to implement and execute operations such as the three registries in this example, the FDO service must infer the association between the profile of an FDO and its set of operations.}
    \label{fig:profile_typing}
\end{figure}

\subsubsection{Attribute Typing with Operation FDOs}
\label{sec:Attribute Typing with Operation FDOs}
To realize the attribute typing mechanism, an operation must be represented in a way that allows it to be related to the attributes in the targeted FDO's information record that represents research data (i.e., labeled here as \textit{data FDO}). This could be easily provided by representing the operation itself as an FDO as well, which we label here as \textit{operation FDO}. This follows the concept's generic approach that each type of bit-sequence can be represented as FDO. The specific implementation of the operation is thus described in this operation FDO information record, detailing its implementation, possible execution mechanism and the type-association requirements in the form of an attribute's key-value pair.

In this modeling approach, the association can be determined by considering one or more attributes, validating either only their key presence or the presence of key-value pairs. An example for this mechanism is illustrated in \autoref{fig:attribute_typing}, where a data FDO and two operation FDOs are shown. Each operation FDO represents the implementation of the underlying operation that is applied either to the bit sequence, i.e., operation 1 for schema validation, or to the kernel metadata, i.e., operation 2 for license evaluation. The information record of the operation FDO contains at least one key-value pair where the key expresses the \emph{requiredInput} and the value references the attribute type that indicates applicability of the operation to all FDOs that contain this attribute type in their information record. Depending on these requirements, only the key to the referenced attribute type, or the key and a specific value in the form of a tuple (cf. operation 1) may be specified. 
This construction can be seen as a dynamic typing mechanism, where operations are “aware” of the traits an FDO must have in order to be applicable to them. With respect to the infrastructure, there will be additional services required that know how to interpret and validate these type-based relations and subsequently execute the implemented operation, which is not detailed in this work. For further reading and technical details of this example, we refer to~\citep{nicolasblumenroehr_2025_14886300}. 
\begin{figure}[tbp]
    \centering
    \includegraphics[width=0.9\textwidth]{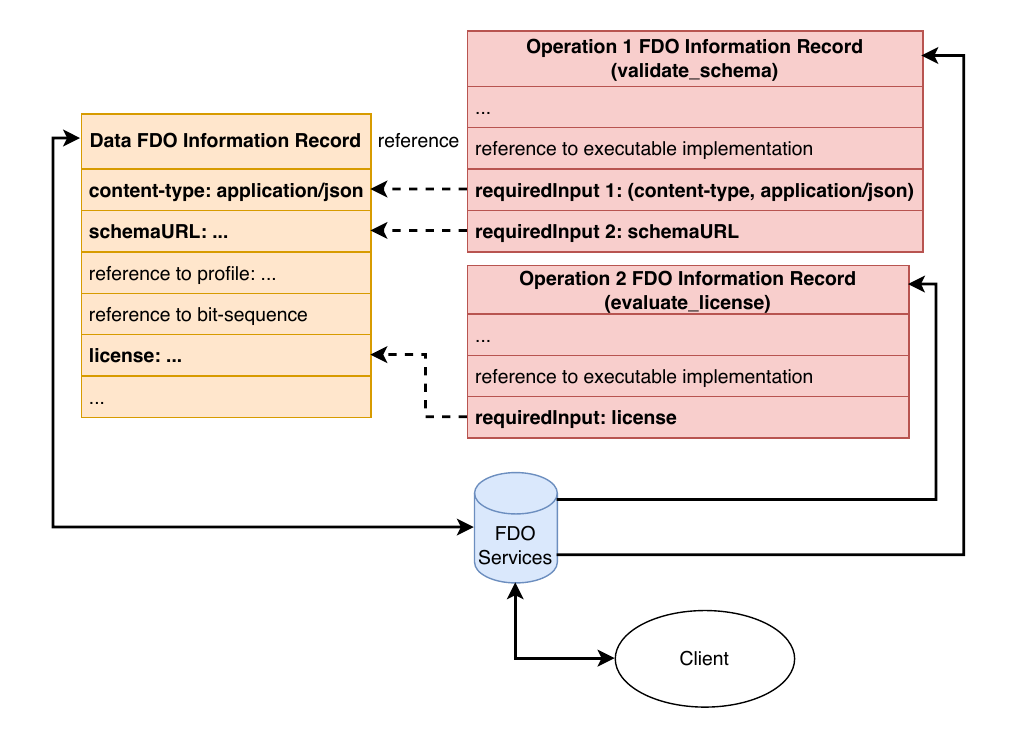}
    \caption{Attribute Typing example. The conceptual workflow for interacting with an FDO based on attribute typing. Irrespective of how the operation is finally performed (requested by the service in this example), the FDO service must infer the association based on the information record contents and reference of the data- and operation FDOs. }
    \label{fig:attribute_typing}
\end{figure}

\section{Model Evaluation and Discussion}
\label{sec:Model Evaluation and Discussion}
In order to evaluate the different approaches for associating FDOs with operations based on the typing mechanisms, we embed the association approaches into a mathematical context by modeling them as directed graphs (section \ref{sec:graph model}). Afterwards, a set of quality indicators is defined that are inspired by the methods used in the domain of entity-relationship modeling as described by~\citep{moody_metrics_1998} (section \ref{sec:quality measures}). These quality indicators finally serve the purpose of putting the different association models in relation to each other and evaluating their advantages, disadvantages and compatibility. To quantify the differences, we define metrics for these quality indicators that are evaluated on each graph model separately. We will also consider qualitative aspects such as granularity, required client knowledge and versatility.

However, in this work, we concentrate only on the comparison between the models rather than providing absolute numbers for the implementation examples we have introduced as these are not relevant in the frame of a comparative analysis on the conceptual level. Nevertheless, the examples will also be briefly discussed with respect to implementation aspects, limitations, and future work (section \ref{sec:compatibility and interoperability}).

\subsection{Modeling the Association Mechanisms as Graphs}\label{sec:graph model}

To compare the association models not only qualitatively but rather on a quantitative basis, the three association approaches need to be put into a mathematical framework. We model the association approaches as directed graphs. This seems natural because associations between FDOs and operations are all based on references pointing from one entity to another entity. Those entities might be FDOs, operations, profiles (in the case of profile typing) or attributes (in the case of attribute typing). Relations such as ``FDO $f$ contains attribute $a$ in its information record'' and ``attribute $a$ points to operation $o$'' directly translate into edges, while the entities named above translate into vertices in a graph. This is further specified by Definition \ref{def: graph models} and visualized by \autoref{fig:example graphs}.

We index our association models with $i\in \{1,2,3\}$, such that $i=1$ refers to record typing, $i=2$ to profile typing, and $i=3$ to attribute typing. In the following, we examine each association model separately under the assumption that the whole FDO ecosystem purely relies on a single association approach. 

\begin{definition}[Components]\label{def: components}
    Let $F$ be the set of all FDOs representing data, $O$ the set of all operations, $P_i$ the set of all FDO profiles and $A_i^{def}$ the set of all attribute definitions (referring to PID-Information Types), in the whole FDO ecosystem. Attribute definitions are instantiated by attributes (e.g., in FDO, operation or profile information records) which are given by the set $A_i$. We denote the numbers of those quantities by $\vert F\vert,$ $ \vert O\vert$, $\vert P_i\vert$, $\vert A_i^{def}\vert$ and $\vert A_i\vert$, respectively. The set $C_i = F\cup O \cup P_i \cup A_i^{def}$ contains all \textbf{components} of the $i$-th association model.
\end{definition}

Attribute definitions determine a key for an attribute together with a set of restrictions on the value of the attribute. Each attribute $a=(a_1,a_2)\in A_i$ is represented by a tuple that consists of a key $a_1$ and a value $a_2$. Two attributes $a=(a_1,a_2)$, $b=(b_1,b_2)\in A_i$ are considered to be the same element (i.e., $a=b$) if and only if they have the same key-value-pair (i.e., $a_1=b_1$ and $a_2=b_2$) and they are part of the same information record.

All components of the FDO ecosystem are uniquely identified by PIDs (and might be considered as FDOs themselves, depending on the maturity of implementation). Some components, such as the set of profiles and the set of attribute definitions or attributes, depend on the examined association approach. For example, attribute definitions might have different required keys and restrictions on the values depending on the model. Also, the content of the profiles might differ according to the implementation and the chosen model. Hence, the set of attributes and the set of profiles is indexed by $i\in\{1,2,3\}$. The FDOs and operations are considered to be the same sets in all models (strictly speaking, we assume that there are bijective mappings $\mathcal{M}_{ij}:F_i\rightarrow F_j$ and $\mathcal{M}'_{ij}:O_i\rightarrow O_j$ between FDOs from different models and operations from different models, for $i\neq j$).  

\begin{definition}[Graph Models]\label{def: graph models}
We define a simple graph model for the three association approaches. For $i\in \{1,2,3\}$, denote $G_i=(V_i,E_i)$ as the \textbf{graph} $G_i$ which consists of \textbf{vertices} $v_i\in V_i$ which are connected by \textbf{edges} $e_i=\{x_i,y_i\}\in E_i$ with $x_i,y_i\in V_i$.   
\begin{itemize}
    \item $i=1$: For record typing, each FDO is directly associated with an operation via an attribute within the information record. Hence, 
    \begin{align*}
        V_1=&\ F\cup A_1\cup O,\\
        E_1=&\ \{\{f,a\}: \text{FDO }f\in F \text{ has the attribute } a\in A_1 \}\  \nonumber\\ 
        &\ \cup \{ \{a,o\}: \text{attribute }a\in A_1\text{ references operation }o\in O\}.
    \end{align*}
    
    \item $i=2$: In terms of profile typing, each FDO references a profile via an attribute in the information record. In turn, an attribute in the profile information record references an FDO operation. Therefore, 
    \begin{align*}
        V_2=&\ F\cup A_2\cup P_2\cup O,\\
        \begin{split}		
        E_2=&\ \{ \{f,a\}: \text{FDO }f\in F \text{ has the attribute } a\in A_2\}\\
        &\ \cup \{ \{a,p\}: \text{attribute }a\in A_2\text{ references profile }p\in P_2\}\\
        &\ \cup \{ \{p,a\}: \text{profile }p\in P_2 \text{ has the attribute } a\in A_2\}\\
        &\ \cup \{ \{a,o\}: \text{attribute }a\in A_2\text{ references operation }o\in O\}.
        \end{split} 
    \end{align*}

    \item $i=3$: For attribute typing, each operation FDO references a set of attributes within an FDO information record (via attributes in the operation FDO). Hence, 
    \begin{align*}
        V_3=&\ F\cup A_3\cup O,\\
        \begin{split}
        E_3=&\ \{ \{o,a\}: \text{operation }o\in O \text{ has the attribute } a\in A_3\}\\
        &\ \cup \{ \{a,a'\}: \text{attribute }a\in A_3\text{ references attribute }a'\in A_3\}\\
        &\ \cup \{ \{a',f\}:\text{attribute } a'\in A_3 \text{ is contained in FDO }f\in F\}.
        \end{split}
    \end{align*}
\end{itemize}
\end{definition}

\begin{figure}
\centering
\begin{subfigure}{0.4\textwidth}
    \includegraphics[width=\textwidth]{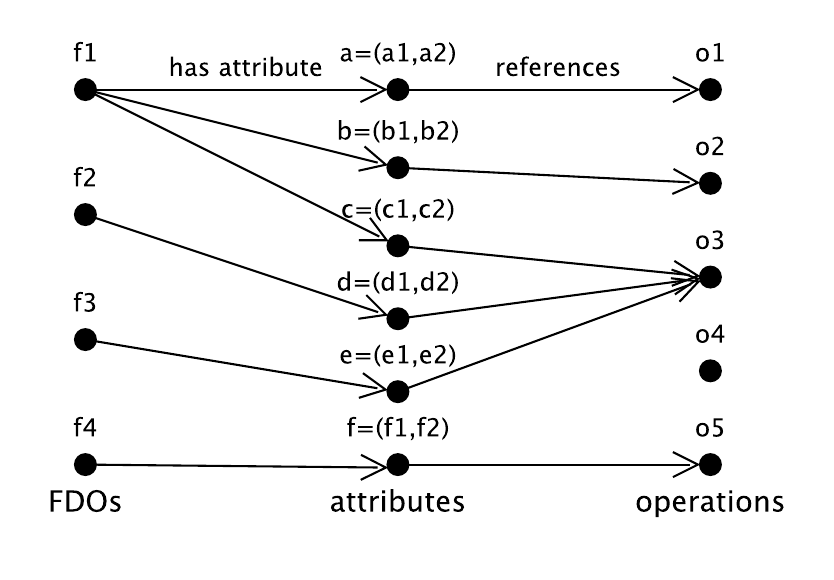}
    \caption{Record typing}
    \label{fig:graph record typing}
\end{subfigure}

\vspace{0.5cm}
\begin{subfigure}{0.65\textwidth}
    \includegraphics[width=\textwidth]{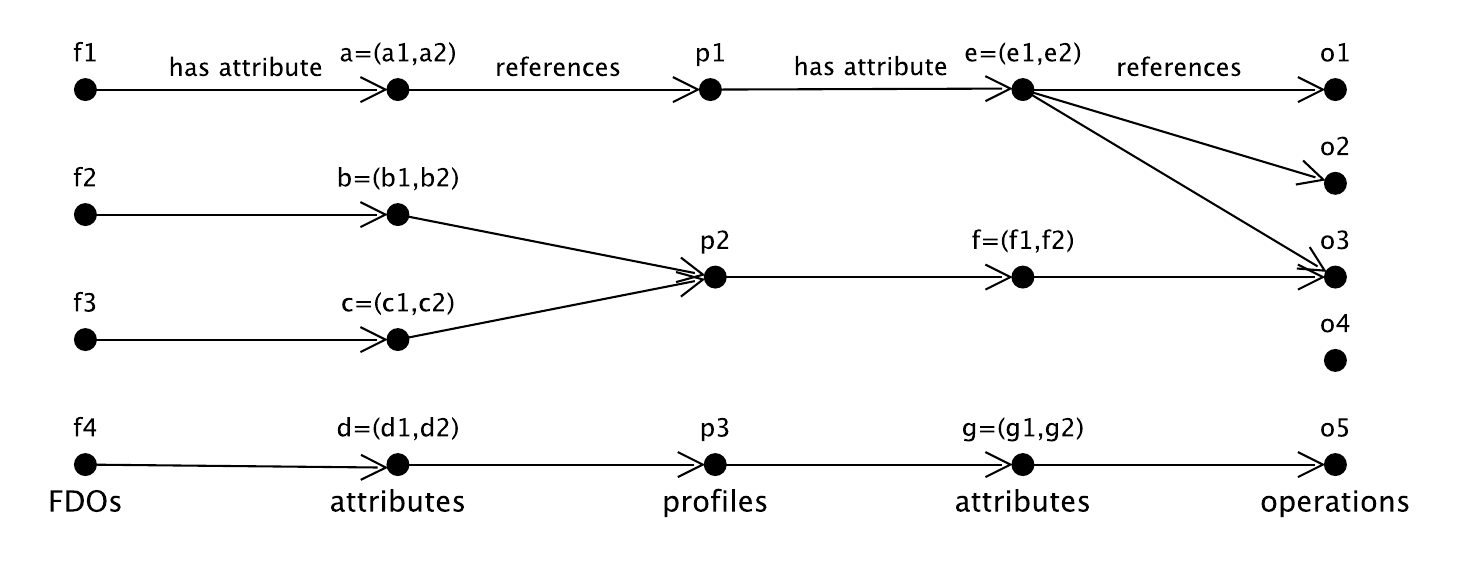}
    \caption{Profile typing}
    \label{fig:graph profile typing}
\end{subfigure}

\vspace{0.5cm}
\begin{subfigure}{0.55\textwidth}
    \includegraphics[width=\textwidth]{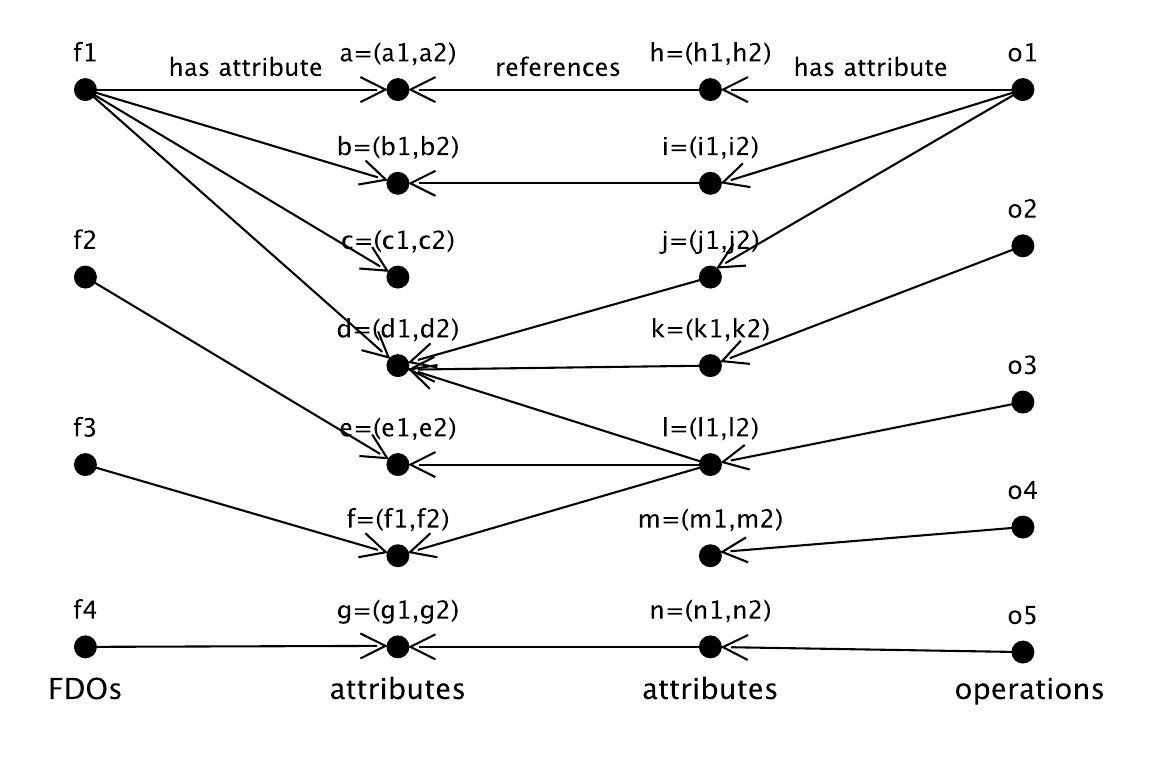}
    \caption{Attribute typing}
    \label{fig:graph attribute typing}
\end{subfigure}

\vspace{0.5cm}
\caption{Graph representations. Exemplary graphs modeling the three association approaches.}
\label{fig:example graphs}
\end{figure} 

Note that all involved attributes represent references, except the attributes for $i=3$ that are located in the FDO information record, which are instead referenced by other attributes. References point from the originating entity to the referenced entity. Consequently, edges $\{x,y\}\in E_i$ are naturally ordered and may be modeled as directed edges (see \autoref{fig:example graphs} where references are displayed as arrows explicitly). However, for any arrow pointing from $x\in V_i$ to $y\in V_i$, there will never be another arrow pointing from $y$ to $x$ due to model definition. Hence, there is no need to differentiate between the orientation of edges, so we will work with simple graphs and adopt the notation as in Definition \ref{def: graph models}.  

\autoref{fig:example graphs} visualizes Definition \ref{def: graph models}. The three graphs all contain the same FDOs $f_1,\dots,f_4$, the same operations $o_1,\dots,o_5$ and they represent the same set of associations: $f_1$ is associated with $o_1,o_2$ and $o_3$, whereas $f_2$ and $f_3$ are both associated to $o_3$, and $f_4$ is associated to $o_5$.

For record typing, each FDO might have several attributes for operation association, which contain all the same key (i.e., $a_1=b_1=c_1=d_1=e_1=f_1$). The attributes directly reference an operation via their value. In this example, attributes $c$, $d$ and $e$ all have the same value (i.e., $c_2=d_2=e_2$) because they reference the same operation. Each sequence of arrows that connects an FDO on the left side with an operation on the right side represents one FDO-operation-association.

In terms of profile typing, each FDO has exactly one attribute containing the profile reference. Those attributes have the same keys (i.e., $a_1=b_1=c_1=d_1$). If two FDOs have the same profile, their attributes point to the same profile in the graph (i.e., $b_2=c_2$). Each profile contains exactly one attribute ($e_1=f_1=g_1$) to specify a list of operations as its value. Similar to record typing, each closed sequence of arrows from left to right represents one FDO-operation-association.

For attribute typing, each data FDO may contain multiple attributes. Similarly, each operation FDO may contain multiple attributes, with keys being all tantamount (i.e., $h_1=i_1=j_1=k_1=l_1=m_1=n_1$). The attributes in the operation FDO information record reference attributes in the data FDO information record (i.e., in this example, we have $h_2=a_1$, $i_2=b_1$, $j_2=d_1$, and so on). If all arrows originating from an operation $o_i$ form closed lines with arrows originating from $f_j$ (where the end of the arrow corresponds to the attributes of the data record), then $o_i$ is associated with $f_j$. Note that this model is a simplification of attribute typing because we just consider the case that attributes in the operation record match with attributes in the FDO information record if the attribute in the FDO information record is present (i.e., has the desired key). We do not consider possible restrictions on the allowed values of attributes in the FDO information record and the resulting impact on granularity.

\subsection{Evaluation of Quality Indicators and Metrics}\label{sec:quality measures}

We are going to examine quantitative quality indicators (simplicity, efficiency, flexibility) and qualitative aspects (granularity, required client knowledge and versatility). For the quantitative quality indicators, we define simple mathematical measures which are separately evaluated for each model under the assumption that the whole FDO ecosystem purely relies on a single association approach.

Throughout this work, we use big \(\mathcal{O}\) notation to assess computational complexity of the conceptual models. Note that we generally make no assumptions about the data structure used in an implementation in which the information concerning the assessments would be stored.
\bigskip

\textbf{Quantitative Quality Indicators}

\emph{Simplicity} refers to how complex it will be for a client to handle an FDO ecosystem that applies a given association model with respect to its structure. This can be measured using metrics such as the number of components involved and the number of their relations using graph theoretical methods. 

\emph{Efficiency} takes into account how complex it will be to find all operations that are associated to an FDO or to assess whether a certain FDO is associated to a given operation. This can be measured using metrics such as the number of edges in the graph that make up an association.

\emph{Flexibility} as a quality indicator relates to the question how many active modifications are required when new components are added to an existing FDO ecosystem that applies a particular association model. This can be measured using metrics such as the number of updates that must be performed when a new association between an FDO and an operation is made.
\bigskip

\textbf{Qualitative Aspects}

Qualitative aspects we are going to consider are the \emph{granularity} of the association models in comparison to the \emph{amount of client knowledge} that is required in order to add the desired associations to a new FDO. In addition, the \emph{versatility} of the models is discussed, which considers the possible processing options of an FDO through its associated operations in relation to the aspects imposed by \emph{efficiency} and \emph{flexibility}.
\bigskip

\begin{definition}
The following notation is introduced to evaluate the quantitative measures (see Theorem \ref{theorem: evaluated measures}).
\begin{enumerate}
    \item For any non-empty subset $F'\subseteq F$, $O_{F'}$ is the set of all operations that are associated with at least one FDO $f\in F'$. For the set containing a single element $F'=\{f\}$, we write $O_f$ instead of $O_{\{f\}}$. For profile typing ($i=2$) and a non-empty subset of profiles $P'\subseteq P_2$, we define the set of all operations that are referenced by at least one profile $p\in P'$ as $O_{P'}$.
    
    \item For any set $O'\subseteq O$, $F_{O'}$ is the set of all FDOs that are associated with at least one operation $o\in O'$. For the set containing a single element $O'=\{o\}$, we write $F_o$ instead of $F_{\{o\}}$.
    
    \item For $f\in F$ and  $o\in O$, let $A_f$ and $A_o$ be the sets of all attributes in the information records of FDOs and operations respectively. 
    
    \item Finally, the following definition just holds for $i=2$: For subsets $F'\subseteq F$ and $O'\subseteq O$, we define  $P_{2,F'}$ as the set of all profiles that are referenced by at least one FDO $f\in F'$, $P_{2,O'}$ as the set of profiles being associated to at least one operation $o\in O'$, and $P_{2,F'O'}=P_{2,F'}\cap P_{2,O'}$ as the set of all profiles that are being part of at least one FDO-operation-association between set elements of $F'$ and $O'$.
\end{enumerate}
\end{definition}

Note that the total number of FDO-operation-associations is represented by $\sum_{f\in F}\vert O_f\vert =\sum_{o\in O}\vert F_o\vert$ irrespective of the association model. 

\begin{definition}[Measures for Quantitative Quality Indicators]\label{def: quality indicators}
    For $i\in\{1,2,3\}$, we define the following metrics to assess the quality indicators:
    \begin{enumerate}
        \item $\mathcal{C}_i$ is the total number of components (see Definition \ref{def: components}) in the FDO ecosystem that are (potentially) part of each association mechanism. This does not only include those FDOs, operations, attribute definitions and profiles which are actually part of at least one FDO-operation-association, but the total sets of the components that might be involved. 
        \item $\mathcal{A}_i$ is the total number of instantiated attributes that are present in FDO, profile, or operation information records, which are actually part of the association mechanism. Here, attributes are counted multiple times if the same key-value pair is present in multiple information records.
        \item $\mathcal{Q}_i$ is is an upper bound on the time complexity to decide whether an FDO $f\in F$ is associated to an operation $o\in O$.
        \item $\mathcal{R}_i$ is an upper bound on the time complexity to find all FDOs that are associated with a single operation.
        \item $\mathcal{S}_i$ is an upper bound on the time complexity to find all operations associated with a single FDO.
        \item $\mathcal{T}_i$ is an upper bound on the time complexity to perform all required updates in the FDO ecosystem to associate a new operation with a set $F'\subseteq F$ of FDOs.
        \item $\mathcal{U}_i$ is an upper bound on the time complexity to perform all required updates in the FDO ecosystem to associate a new FDO with a set of operations $O'\subseteq O$.
    \end{enumerate}
\end{definition}

\begin{theorem}[Evaluated Measures]\label{theorem: evaluated measures}    
    The measures specified in Definition \ref{def: quality indicators} are evaluated to the following quantities:
    \begin{enumerate}
        \item 
        \begin{align*}
            \mathcal{C}_1&=\vert F\vert + \vert O\vert + 1, \\
            \mathcal{C}_2&= \vert F\vert + \vert O\vert + \vert P_2\vert + 2, \\
            \mathcal{C}_3&= \vert F\vert + \vert O\vert + \vert A_3^{def}\vert.
        \end{align*}

        \item For $i=3$, let $b_1,\dots,b_{\vert F_O\vert}\in \mathbb{N}$ be the number of attributes being part of the association mechanism for the FDOs $f_1,\dots, f_{\vert F_O\vert}$, and let $d_1,\dots,d_{\vert O_F\vert}\in \mathbb{N}$ be the number of attributes taking part in the association mechanism for each operation $o_1,\dots, o_{\vert O_F\vert}$. 
        \begin{align*}
            \mathcal{A}_1&=\displaystyle\sum_{f\in F_O}\vert O_f\vert,\\
            \mathcal{A}_2&=\vert F_O\vert + \vert P_{2,FO}\vert ,\\
            \mathcal{A}_3&=\displaystyle\sum_{j=1}^{\vert F_O\vert}{b_j} + \displaystyle\sum_{j=1}^{\vert O_F\vert}{d_j}.  
        \end{align*}

        \item \begin{align*}
            \mathcal{Q}_1 &= \mathcal{O}(\vert A_f\vert),\\
            \mathcal{Q}_2 &= \mathcal{O}(\vert A_f\vert + \vert O_{P_f}\vert),\\
            \mathcal{Q}_3 &= \mathcal{O}(\vert A_f\vert +\vert A_o\vert ).
        \end{align*}
        
        \item 
        \begin{align*}
            \mathcal{R}_1&=\mathcal{O}\left(\displaystyle\sum_{f\in F}\vert A_f\vert\right),\\
            \mathcal{R}_2&=\mathcal{O}\left(\displaystyle\sum_{f\in F}\vert A_f\vert+ \displaystyle\sum_{p\in P_{F}}\vert O_{p}\vert\right),\\
            \mathcal{R}_3&=\mathcal{O}\left(\displaystyle\sum_{f\in F}\vert A_f\vert+\vert A_o\vert\right).
        \end{align*}

        \item 
        \begin{align*}
            \mathcal{S}_1&= \mathcal{O}(\vert A_f\vert),\\
            \mathcal{S}_2&=\mathcal{O}(\vert A_f\vert+\vert O_{P_f}\vert),\\            \mathcal{S}_3&=\mathcal{O}\left(\vert A_f\vert + \displaystyle\sum_{o\in O}\vert A_o\vert\right).\\
        \end{align*}

        \item 
        \begin{align*}
            \mathcal{T}_1&=\mathcal{O}(\vert F'\vert),\\
            \mathcal{T}_2&= \mathcal{O}(\vert P_{2,o}\vert),\\
            \mathcal{T}_3&=0.
        \end{align*}

        \item 
        \begin{align*}
            \mathcal{U}_1&=\mathcal{O}(\vert O'\vert),\\
            \mathcal{U}_2&=0,\\
            \mathcal{U}_3&=0. 
        \end{align*}
    \end{enumerate}
\end{theorem}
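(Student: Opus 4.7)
The plan is to prove each of the seven items separately for each of the three models, treating them as essentially independent counting or traversal arguments on the graph $G_i$ from Definition of graph models. I would organize the work in three blocks: first the structural counts $\mathcal{C}_i$ and $\mathcal{A}_i$, which are purely combinatorial; then the lookup complexities $\mathcal{Q}_i$, $\mathcal{R}_i$, $\mathcal{S}_i$, which amount to bounding the length of the natural path in $G_i$ that realizes an FDO–operation association, and then summing over the appropriate set of starting vertices; and finally the update complexities $\mathcal{T}_i$ and $\mathcal{U}_i$, which bound the number of pre-existing vertices whose attached attribute set must be modified. Throughout, I would invoke the standing assumption that no auxiliary data structures can be assumed, so that every membership test inside a record of size $n$ is treated as $\Theta(n)$ rather than $O(1)$.

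For $\mathcal{C}_i$ I would simply unfold Definition of components, adding $|F|$, $|O|$, $|P_i|$ and $|A_i^{def}|$ and substituting the model-specific constants: record typing uses exactly one reserved attribute definition (the key whose value is an operation PID), hence the $+1$; profile typing uses two such definitions (an FDO-to-profile key and a profile-to-operations key) plus the non-empty profile set, hence the $|P_2|+2$; attribute typing has no distinguished profile set but admits an arbitrary family $A_3^{def}$ of definitions. For $\mathcal{A}_i$ I would count the instantiated attributes that occur on edges of $G_i$: in model 1 each FDO–operation association contributes exactly one attribute on the FDO side, summing to $\sum_{f\in F_O}|O_f|$; in model 2 each FDO in $F_O$ carries a single profile-reference attribute and each profile in $P_{2,FO}$ carries a single operations-list attribute; in model 3 the attributes split, by construction of the graph, into the $b_j$ attributes in data-FDO records and the $d_j$ attributes in operation-FDO records.

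For $\mathcal{Q}_i$ I would trace the shortest path in $G_i$ between the given $f$ and $o$ and count the work needed to traverse it without indexes. In model 1 the path has length two, via one attribute of $f$, so a linear scan of $A_f$ suffices. In model 2 we first scan $A_f$ to locate the profile reference, then scan the operations list of that profile, giving $|A_f|+|O_{P_f}|$. In model 3 we must test, for every attribute of $o$'s record, whether the referenced key appears in $f$'s record, which costs $|A_f|+|A_o|$. The measures $\mathcal{R}_i$ and $\mathcal{S}_i$ then follow by summing the one-pair cost over all FDOs (for $\mathcal{R}_i$) or by identifying which side of the path varies (for $\mathcal{S}_i$); the only nontrivial entry is $\mathcal{S}_3$, where in the absence of a reverse index the search must inspect every operation record, yielding $|A_f|+\sum_{o\in O}|A_o|$.

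For $\mathcal{T}_i$ and $\mathcal{U}_i$ I would identify which pre-existing components must have their attribute set modified when a new edge set is inserted. Record typing forces one new attribute per affected FDO, profile typing forces an update per profile currently instantiated by an FDO in the target set, and attribute typing forces no update of existing components because the attribute constraints are carried by the newly created operation or data FDO itself. The symmetric case $\mathcal{U}_i$ is handled analogously. The main obstacle I anticipate is not computational but definitional: the zero entries $\mathcal{T}_3$, $\mathcal{U}_2$, $\mathcal{U}_3$ are only meaningful once we declare that the construction of the new component is \emph{not} counted as an update of the ecosystem, and this convention must be stated and applied uniformly across all three models, otherwise the cross-model comparison collapses. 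A secondary subtlety is that in model 2 a new association created by editing a profile automatically propagates to all of its instances, so one must argue carefully that $|P_{2,o}|$, rather than the larger quantity $|F'|$, is the correct cost for $\mathcal{T}_2$.
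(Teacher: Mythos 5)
Your proposal is correct and follows essentially the same route as the paper's proof: the same component and attribute counts for items 1--2, the same linear-scan/path-traversal arguments (with constant-time profile access assumed) for items 3--5, and the same identification of which pre-existing records must be edited for items 6--7, including the observation that profile edits propagate to all instances so $\mathcal{T}_2$ costs $\vert P_{2,o}\vert$ rather than $\vert F'\vert$. Your explicit remark that the zero entries require the convention of not counting the creation of the new component itself as an update is a useful clarification that the paper leaves implicit, but it does not change the argument.
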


\begin{proof}
    \begin{enumerate}
        \item According to Definition \ref{def: components}, the components involve the sets $F$, $O$, $P_i$ and $A_i^{def}$. However, we just count those components which are potentially taking part in the association mechanism. For $i=1$, this is the set of FDOs, the set of operations, and a single attribute definition (as all FDOs reference their operations via the same attribute key). 
        For $i=2$, there are two attribute definitions involved in the association mechanism, one to reference an FDO profile in all FDO information records, and one to reference a list of operations in all profile information records. For $i=3$, there are no restrictions on the set of attributes which are being used in the FDO information records. Hence, all attribute definitions $A_3^{def}$ are potentially taking part in the association mechanism.

         \item Counting the number of attributes being part of the association mechanism means to count all arrows with the label ``has attribute'' as illustrated in \autoref{fig:example graphs} that are part of at least one FDO-operation association. For $i=1$, each association corresponds to one attribute, such that the number of attributes equals the total number of associations. For $i=2$, each FDO contains exactly one attribute to be connected to a profile (totaling $\vert F_O\vert$ attributes), and each profile has exactly one attribute that connects it to a set of operations (totaling $\vert P_{2,FO}\vert$ attributes). For $i=3$, the equation follows by definition of $b_j$ and $d_j$.

        \item Let $f\in F$ be any FDO and $o\in O$ be any operation. For $i=1$, a client would need to search the whole FDO information record for the attribute containing the reference to $o$, taking time $\mathcal{O}(\vert A_f\vert)$. For $i=2$, one needs to find the profile $p$ in the FDO information record within time $\mathcal{O}(\vert A_f\vert)$. Since accessing the profile and its list of operations highly depends on the implementation but is not directly relevant for the comparative analysis, we assume access in constant time. Finally, the list of operations needs to be searched for the reference to $o$, taking time $\mathcal{O}(\vert O_{p}\vert)$. For $i=3$, additionally, all attributes in the operation information record need to be found that determine the association, which is done in $\mathcal{O}(\vert A_o\vert)$. Afterwards, each of the associations that were found need to be matched against the attributes in the information record (after converting either the attributes in the FDO or the attributes in the operation FDO into a suitable format).

        \item For $i=1$, one has to search each FDO information record in the FDO ecosystem for its operations, which is $\mathcal{O}(\sum_{f\in F}\vert A_f\vert)$. For $i=2$, similar time is required to find all profiles $P_{2,F}$. A profile has one attribute containing a list of operations, and we assume that each list of operations can be accessed in constant time. Furthermore, checking whether those lists contain the operation requires reading the whole operation list within time $\mathcal{O}(\sum_{p\in P_F}\vert O_{p}\vert)$. For $i=3$, first find all operation attributes and convert them into a suitable format within time $\mathcal{O}(\vert A_o\vert)$. Then, read all attributes in all FDOs and check whether they match the operation attributes, taking time $\mathcal{O}(\sum_{f\in F}\vert A_f\vert)$.

        \item For all $i\in \{1,2,3\}$, it is required to read all attributes in the information record. For $i=2$, one then accesses the profile $p$ within $\mathcal{O}(1)$ and the list of operations also within $\mathcal{O}(1)$. Reading all elements from that list takes time $\mathcal{O}(\vert O_p\vert )$. For $i=3$, the FDO information record is converted into a suitable format (within $\mathcal{O}(\vert A_f\vert)$). Then, each operation FDO has to be checked against the data FDO, which requires time $\mathcal{O}(\sum_{o\in O}\vert A_o\vert)$.

        \item For $i=1$, relating a new operation to the set $F'$ requires to add one attribute in each FDO information record, yielding $\mathcal{O}(\vert F'\vert)$ updates in total. For $i=2$, the new operation needs to be added to all profiles that it should be applicable to, which are $\vert P_{2,o}\vert$. For $i=3$, no updates need to be done because the set $F'$ is implicitly defined by the attributes in the operation FDO.

        \item  To associate a new FDO with a set of operations $O'$, $\vert O'\vert$ new attributes need to be added to the FDO information record for $i=1$. In the case of $i=2$, no updates need to be performed because the new FDO is required to have a profile anyway and the profile implicitly defines the set $O'$. For $i=3$, no updates need to be performed with the same reason as detailed in 6.  
    \end{enumerate}    
\end{proof}

Note that the set $F'$ in part 6 is either defined by the client ($i=1$) or it is imposed by the model ($i=2$ and $i=3$). This is because the three association mechanisms follow different ideas: For $i=1$, the client can decide on any association individually, so he will define the set $F'$. For $i=2$, when a new operation is added, the associations are partly to be decided on by somebody who has the right to edit the required profiles and partly implied by the model itself (the associations between profiles and FDOs are already given and cannot be changed). For $i=3$, the set $F'$ is fully determined by the model in advance, depending on the attributes that are specified in the operation record. A similar observation applies to part 7: For $i=1$, the client will define the set $O'$, whereas for $i=2$ and $i=3$, the set $O'$ is fully specified by the model. Such considerations need to be taken into account when assessing the quality measures.

\subsection{Comparison of Measures}

We now compare the measures to evaluate the strengths and weaknesses of the different association models, starting with the quantitative measures.
\begin{itemize}
    \item \textbf{Simplicity:} Both the number of components and the number of attributes that are being part of the association mechanism are measures for the simplicity of the model. If few attributes are being involved, the information records (of FDOs, profiles, and operations) can be kept comparatively short. If additionally few components are involved, the models are easier to understand for potential users. Regarding components, we have $\mathcal{C}_1<\mathcal{C}_2\leq \mathcal{C}_3$. However, there does not seem to be any general order of $\mathcal{A}_1,\mathcal{A}_2$ and $\mathcal{A}_3$. Under the assumption that the number of profiles is relatively small in comparison to the number of FDOs, in most cases, $\mathcal{A}_2$ will be smallest. $\mathcal{A}_1$ would be even smaller in the special case that each FDO is associated to exactly one operation. $\mathcal{A}_3$ is comparably small if each operation relies on few attributes for association. Overall this shows that record typing and profile typing are less complex models in comparison to attribute typing.
    
    \item \textbf{Efficiency:} All measures $\mathcal{Q}_i$, $\mathcal{R}_i$ and $\mathcal{S}_i$ quantify the effort for a client to find certain FDO-operation-associations within the FDO ecosystem. To compare those measures, we note that all upper bounds are sharp upper bounds.
    
    $\mathcal{Q}_i$ quantifies the effort to decide whether a certain FDO is associated to an operation. This is obviously smallest for record typing, while $\mathcal{Q}_2\lesssim \mathcal{Q}_3$ and $\mathcal{Q}_2\gtrsim \mathcal{Q}_3$ are both possible: if there are few operations associated with the profile of $f$ for $i=2$, then $\mathcal{Q}_2\lesssim \mathcal{Q}_3$, otherwise $\mathcal{Q}_2\gtrsim \mathcal{Q}_3$.
    Considering $\mathcal{R}_i$, it is obvious that $\mathcal{R}_1$ is smallest. For the other two association models, it is both possible that $\mathcal{R}_2\lesssim \mathcal{R}_3$ or $\mathcal{R}_2\gtrsim \mathcal{R}_3$: If (very) few operations are associated with the set $F$ (for example, when all $f\in F$ have the same profile), then $\mathcal{R}_2\lesssim \mathcal{R}_3$. Otherwise, $\mathcal{R}_2\gtrsim \mathcal{R}_3$.
    For $\mathcal{S}_i$, we observe that $\mathcal{S}_1$ is smallest, while $\mathcal{S}_2$ also scales with the number of operations related to the given profile and $\mathcal{S}_3$ scales with the number of attributes in all operations, which is considerably larger.

    Overall, this shows that record typing is the best approach in terms of efficiency. Profile typing and attribute typing are less efficient in terms of measures $\mathcal{Q}_i$ and $\mathcal{R}_i$. However, the measure $\mathcal{S}_3$ reveals the high costs of attribute typing in comparison to the other models, because one has to iterate over all attributes of all operations in the FDO ecosystem to find all operations associated to one FDO.
    
    \item \textbf{Flexibility:} Assuming that the number of FDOs associated to the new operation is much larger than the number of profiles (for $i=2$) associated to this operation, it trivially follows that $\mathcal{T}_3\lesssim \mathcal{T}_2 \lesssim \mathcal{T}_1$. For $\mathcal{U}_i$, obviously $\mathcal{U}_1\gtrsim \mathcal{U}_2=\mathcal{U}_3$. Hence, in terms of required updates, attribute typing is most efficient, followed by profile typing. In comparison, record typing is relatively inefficient.
\end{itemize}

Finally, we will comment on qualitative aspects, i.e., the granularity and client knowledge, as well as the versatility. 
\begin{itemize}
    \item \textbf{Granularity and client knowledge:} For record typing, each FDO can be associated with any operation as desired by the client. This is the most granular approach, as any combination of FDOs and operations is possible. However, for each newly defined FDO, the client who has introduced the FDO information record has to think of which operations to include into the information record. This requires both domain knowledge regarding the content information in the FDO, and knowledge about the association mechanism. Therefore, the price for higher granularity is that for each new FDO a careful individual inspection might be required to make an informed decision on operation association. 
    
    Attribute typing has a slightly smaller granularity, as not every FDO can be seamlessly associated to any operation. In turn, the association mechanism works out automatically, which means that clients just need to include all information they have available into the information record, without deciding for specific operations or attributes. However, in case a client has a specific operation in mind that was not automatically associated to the FDO but which he wants to be associated, he still needs to figure out which additional attributes to include into the FDO information record. 
    
    Profile typing is the least granular approach. As each operation is associated to a whole class of FDOs, there is a need for many different profiles to be made available to the client in order to reach a granularity that is comparable with the other models. The advantage of profile typing is that the client just has to make an informed choice which profile to use, and then he will be instructed which attributes are required in the information record due to the profile definition. Hence, he does not have to think at all about associating his FDO to any operations. 

    \item \textbf{Versatility:} In contrast to its high granularity, the overall versatility of record typing is considered to be the lowest, as each FDO-operation association must be explicitly declared to increase the possible processing options for an FDO.
    
    Profile typing has much greater versatility compared to record typing because a profile is typically reused several times to create a set of FDOs, and all of these FDOs automatically have the possible processing options defined by the operations associated with that profile. 
    
    Compared to attribute typing, the versatility of profile typing is potentially less because attribute definitions that constitute an association condition are typically reused across profiles and may occur in multiple data FDOs. These FDOs then automatically have the possible processing options defined by these operations.
    In this way, an operation can still be associated with any FDO whose profile contains the required set of attributes, and the association is not missed simply because the association between the operation and the profile was not explicitly made. In addition, an operation associated via profile typing may assume the presence of specific, not necessarily mandatory, attribute definitions in the profile. This could result in incompatibilities when executing the operation in case these attribute definitions were not instantiated for a particular FDO. With attribute typing, this cannot happen since the instantiation of all required attribute definitions is assured as part of the association process.
\end{itemize}

\subsection{Interoperability of Association Models}\label{sec:compatibility and interoperability}

In contrast to the other quality indicators, we do not define specific metrics to quantify different levels of interoperability which is out of scope for this work. Instead, we describe the implications for interoperability of FDOs and their operations by the compatibility of the introduced association models.

Interoperability of FDO operations refers to the ability to perform consistent, standardized operations on FDOs across different systems and platforms, ensuring that actions such as accessing, processing, or transforming the objects can be executed reliably and requested uniformly by a client, regardless of the environment. Therefore, from our point of view, different association models should be consistent and compatible with respect to a standardized FDO type system that utilizes one or more typing mechanisms.

This ensures that when an FDO is accessed or manipulated across different platforms, its type definitions and associated operations are consistently interpreted and executed. The type system provides a common language for using the standardized structure of FDOs based on one or more typing mechanisms, enabling seamless interaction between systems. Regardless of which typing mechanisms are implemented within a service, it is critical that all clients accessing an FDO obtain the same set of associated operations independent from the underlying model.

Profiles are essential for this, since they provide a minimal, standardized metadata structure for all FDOs. Because all association models for FDO Operations are expected to work with either a profile, profile attributes (also operations that are specified in the record), or a combination of both, they maintain a basic level of compatibility. Different association models using either record, profile or attribute typing can therefore be applied to the same FDO since the type system serves as a common language that allows different models to ``understand'' which operations are valid. This also means that regardless of whether a simple or complex association model is used, the kernel metadata ensures that at least a core set of operations can be applied universally.

\subsection{Implementation Considerations}
Based on the results of our comparative evaluation and considering the approach of modeling the different association models using directed graphs, we conclude that implementations such as described in section \ref{sec:Record Typing in Interactive Computing Environments}-\ref{sec:Attribute Typing with Operation FDOs} would highly profit from storing the interconnected components and the rules of each association model in proper graph data structures. This way, the assessment of the information about associations as described by the metrics is done once at ingestion time and stored as vertices and nodes according to the model, yielding the structure as illustrated in \autoref{fig:example graphs}. The repetitive procedures described by the quality indicators and quantified by their metrics are then facilitated. For example, assessment about which operations are associated with a given FDO and vice versa can be performed with simple graph queries. More complex procedures such as it is the case for attribute typing (cf. \(\mathcal{S}_{3}\)) could be also compensated this way by caching and integrating rules for inferring information. This could take place on the level of the object entities, as well as on the level of the services which store additional information about some components, for example the profiles.

\section{Conclusions}
In this paper, we explored multiple modeling approaches for associating FAIR Digital Objects with their operations through different typing mechanisms based on three example implementations. Our analysis demonstrates that each model—record typing, profile typing, and attribute typing—has distinct advantages and trade-offs concerning simplicity, efficiency, flexibility, versatility, and interoperability for FDO ecosystems. While record typing offers simplicity, profile typing and attribute typing provide enhanced flexibility and versatility. Our findings also indicate that these association models are in so far compatible with each other that a particular FDO entity could incorporate all approaches at the same time. This is also relevant with respect to interoperability between different FDO ecosystems. Ultimately, adopting an association model will depend on the specific requirements of the data environment, including client expectations and computational constraints. Future work will need to consider how to manage FDO ecosystems at scale and which technologies are most suitable for implementing different models, ensuring a robust foundation for machine-actionable data infrastructures.

\paragraph{Acknowledgements}
This work is funded by the Helmholtz Metadata Collaboration Platform (HMC), NFDI4ING (DFG – project number 442146713), and supported by the research program “Engineering Digital Futures” of the Helmholtz Association of German Research Centers. Funded by the European Union. This work has received funding from the European High Performance Computing Joint Undertaking (JU) under grant agreement No 101093054.

We would like to thank the FDO Forum group participants who contributed valuable insights through discussions and remarks during this work. Special thanks go to Larry Lannom from the Corporation for National Research Initiatives (CNRI) and Yudong Zhang from GESIS--Leibniz-Institut für Sozialwissenschaften.

\paragraph{Authors' Contributions}
Supervision: Nicolas Blumenröhr; Conceptualization: Nicolas Blumenröhr, Jana Böhm, Marco Kulüke, Christophe Blanchi, Peter Wittenburg, Ulrich Schwardmann, Sven Bingert; Methodology and Evaluation: Nicolas Blumenröhr, Jana Böhm, Philipp Ost; Revision of State-of-the Art analysis and background: Nicolas Blumenröhr, Jana Böhm, Philipp Ost, Peter Wittenburg, Ulrich Schwardmann, Sven Bingert, Christophe Blanchi
\printbibliography

\newpage

\theendnotes
\end{document}